\documentclass[a4paper,11pt]{article}

\usepackage{jheppub}
\usepackage[T1]{fontenc}
\usepackage{lmodern}
\usepackage{amsmath,amssymb,bm,amsthm}
\usepackage{booktabs,tabularx,array}
\usepackage{microtype}
\usepackage{placeins}
\usepackage[numbers,sort&compress]{natbib}

\graphicspath{{figures/}}
\newcommand{\Mpl}{M_{\rm Pl}}

\newcolumntype{Y}{>{\raggedright\arraybackslash}X}
\newtheorem{proposition}{Proposition}[section]
\newtheorem{corollary}{Corollary}[section]

\title{Gravitational-wave propagation and standard sirens in dynamical Barbero--Immirzi gravity: an action-level analysis}

\author[1]{Zhi-Fu Gao,}
\author[2]{Hui Wang,}
\author[3]{Luiz Carlos Garcia de Andrade,}
\author[1]{Na Wang,}
\author[4]{Guo-Qiang Jin,}
\author[5,6]{Zhou-Jian Cao}

\affiliation[1]{State Key Laboratory of Radio Astronomy and Technology, Xinjiang Astronomical Observatory, Chinese Academy of Sciences, Urumqi 830011, China}
\affiliation[2]{Shanxi Province Intelligent Optoelectronic Sensing Application Technology Innovation Center, Yuncheng University, Yuncheng, China}
\affiliation[3]{Departamento de F\'isica Te\'orica, IF-UERJ, Rio de Janeiro, Brazil}
\affiliation[4]{College of Mechanical and Electrical Engineering, Tarim University, Alar 843300, Xinjiang, China}
\affiliation[5]{Institute for Frontiers in Astronomy and Astrophysics, Beijing Normal University, Beijing 102206, China}
\affiliation[6]{School of Physics and Astronomy, Beijing Normal University, Beijing 100875, China}

\emailAdd{wanghuiycu@163.com}

\abstract{We develop an action-level framework for identifying which dynamical Barbero--Immirzi sectors can modify cosmological tensor propagation in a bosonic, two-derivative, curvature-linear Einstein--Cartan class. Eliminating the algebraic Lorentz connection shows that the Holst-to-Palatini ratio contributes to the scalar kinetic metric, whereas the transverse-traceless tensor normalization is controlled by the parity-even Hilbert--Palatini coefficient. Consequently, a dynamical Holst coefficient at fixed parity-even normalization does not by itself generate anomalous gravitational-wave friction: the minimal dynamical-Holst theory lies exactly on the general-relativistic propagation surface. We then construct a leading analytic parity-invariant nonminimal realization, exhibit a regular cosmological background satisfying both Friedmann equations, and verify the invariant two-derivative kinetic conditions along that benchmark. Finally, we recast published GWTC-3 and GWTC-4.0 constraints and the two public GWTC-4.0 hyperposterior products to determine which action-level combination present standard sirens constrain. The corresponding catalog constraints remain broad and depend on the adopted priors and population assumptions; it
constrains the evolution of a nonminimal parity-even curvature sector rather
than minimal Barbero--Immirzi torsion, providing an action-level interpretation
of standard-siren propagation tests.}

\keywords{Classical Theories of Gravity, Cosmology of Theories beyond the SM, Gravitational waves}
\arxivnumber{2608.19392}

\begin{document}
\maketitle
\section{Introduction}
\label{sec:intro}

Einstein--Cartan gravity treats the tetrad and Lorentz connection as
independent variables.  In the minimal theory torsion is nondynamical and is
fixed algebraically by spin \cite{Kibble1961,Hehl1976}.  The Holst term adds a
parity-odd contraction of the curvature.  Its constant coefficient, usually
written in terms of the Barbero--Immirzi (BI) parameter, does not change the
vacuum Einstein equations \cite{Holst1996,Barbero1995,Immirzi1997}, although it
can enter fermionic contact interactions \cite{Freidel2005,Perez2006}.

Promoting the inverse BI parameter to a pseudoscalar makes its gradient a
source of torsion.  Because the connection remains algebraic, eliminating it
gives general relativity (GR) plus a pseudoscalar with a field-dependent
kinetic term \cite{Taveras2008,TorresGomez2009}.  The related Nieh--Yan
construction gives a canonical pseudoscalar and clarifies its matter
interactions \cite{Calcagni2009,MercuriTaveras2009,Mercuri2009}.  This
distinction is decisive for gravitational-wave (GW) propagation: torsion can
alter the homogeneous stress tensor without changing the coefficient of the
tensor kinetic operator.  A background scalar therefore does not by itself
imply an anomalous GW friction term.

General first-order scalar theories make this operator distinction explicit.
Einstein--Cartan actions with field-dependent Palatini, Holst, and Nieh--Yan
coefficients have been reduced systematically to metric effective theories
\cite{Shaposhnikov2020,Karananas2021,Langvik2021,Rigouzzo2022}; derivative
expansions and quantum treatments likewise identify the independent torsion
operators \cite{Diakonov2011,Shapiro2014}.  In metric language, a running
tensor normalization belongs to the scalar--tensor sector
\cite{Bellini2014}, whose luminal subset became especially relevant after the
binary-neutron-star speed bound \cite{Ezquiaga2017}.

Standard sirens constrain a running tensor normalization through the ratio
of GW and electromagnetic luminosity distances
\cite{Nishizawa2018,Belgacem2018a,Belgacem2018b}.  Forecasts and catalog
analyses have developed this observable across detector bands and source
populations \cite{Belgacem2019,Lagos2019,AbbottGW170817,Finke2021,
Mastrogiovanni2021,Mukherjee2021}.  The commonly used phenomenological parameterization introduces a ratio between
gravitational-wave and electromagnetic luminosity distances, together with an
asymptotic deviation parameter and a transition index. The general-relativistic
limit is recovered when the asymptotic deviation vanishes.  GWTC-3 analyses established broad and
prior-sensitive bounds \cite{Mancarella2022,Leyde2022,ChenGray2024,GWTC3}.  The
GWTC-4.0 catalog and cosmology analysis substantially enlarged the data set
and released pipeline-level hyperposterior samples
\cite{GWTC4catalog,GWTC4cosmology,GWTC4data}.

Complementary studies test modified friction with precessing binaries and
parity- or Lorentz-violating propagation \cite{Lin2025,Zhu2024}.  Earlier
Immirzi-field work also studied vacuum GW polarizations and test-particle
response in torsionful first-order or Palatini models
\cite{Bombacigno2018,Bombacigno2019}; those observables are distinct from the
cosmological amplitude damping of the metric tensor modes considered here.
BI and Einstein--Cartan phenomenology further includes baryogenesis, waveform
generation, torsion masses, and fermion condensates
\cite{Aliberti2023,Battista2021,GarciaGao2026,GarciaFermion2026}.  Recent
JHEP analyses of Einstein--Cartan gravity emphasize the importance of deriving
the propagating field content only after the connection has been eliminated
\cite{Karananas2024}.  Future population and large-scale-structure methods
may reduce current degeneracies \cite{Bertheas2026,DeLeo2026,Nanadoumgar2026}.

Recent phenomenological BI--siren studies by some of the present authors
parameterized additional friction directly at the propagation-equation level,
including an isolated BI term and a BI--quintessence extension
\cite{GaoDarkSiren2026,GaoQuintessence2026}.  The action reduction below
sharpens the microscopic interpretation of those constructions: an effective
friction cannot be attributed to the minimal two-derivative dynamical-Holst
action alone.  Within the action class
studied here it must instead be matched to an evolving parity-even curvature
coefficient.  This distinction is essential when translating a standard-siren
bound into a statement about microscopic BI dynamics.

The central question is therefore an operator question rather than a
phenomenological one: which coefficient in a first-order BI action actually
normalizes the propagating tensor mode?  This work answers that question at
the action level.  First, we derive the tensor action of the minimal
dynamical-Holst theory and establish an exact propagation result: a dynamical Holst
coefficient at fixed Hilbert--Palatini normalization does not generate
anomalous GW friction.  Second, for the curvature-linear first-order Einstein--Cartan
class in Eq.~\eqref{eq:mother_action} we show that the parity-even Hilbert--Palatini coefficient alone
sets the tensor normalization, while the Holst-to-Palatini ratio fixes a
specific torsion contribution to the scalar kinetic metric.  The minimal-sector result is thus a corollary of a broader propagation theorem, rather than
a property of one background solution.  We then introduce the leading
parity-invariant nonminimal completion, construct a regular cosmological
realization satisfying the invariant two-derivative kinetic conditions by
integrating the Raychaudhuri equation and
reconstructing its potential, and use standard-siren results only as an
observational application of the action-level map.  For this purpose we
transform the released GWTC-4.0 posterior samples sample by sample, retain the
published parameter covariances and prior measure, and keep the final catalog
result separate from the two versioned public pipeline products.

The remainder of this paper is organized as follows.  Section~\ref{sec:minimal}
derives the exact propagation result for the minimal dynamical-Holst
sector.  Section~\ref{sec:completion} develops the first-order BI curvature
completion and its tensor dynamics, while Sec.~\ref{sec:background} tests
background consistency and constructs a regular benchmark.  The observational
analysis is presented in Secs.~\ref{sec:data} and
\ref{sec:posterior_feasibility}, where the action-level prediction is connected
to current standard-siren constraints, GWTC-3 and GWTC-4.0 results are recast,
the public pipelines are compared, and the remaining propagation--population
degeneracies are examined.  Section
\ref{sec:discussion} places the results in the context of earlier BI and
modified-propagation studies.  Section~\ref{sec:conclusion} summarizes the
conclusions and outlines the observational and theoretical steps needed for a
model-specific test.  Technical derivations and reproducibility details are
collected in the appendices.

We use signature $(-,+,+,+)$, $\epsilon_{0123}=+1$, and the reduced Planck
mass $\Mpl$.  A prime on a background quantity denotes $d/dN$ with
$N\equiv\ln a$, except where a derivative is explicitly labelled by its
argument.

\section{Minimal dynamical Holst gravity: exact GR-like tensor propagation}
\label{sec:minimal}

Let $e^I$ be a tetrad and $\omega^{IJ}$ an independent Lorentz connection,
with curvature $R^{IJ}(\omega)$.  A convenient convention for the minimal
dynamical inverse-BI field $\beta$ is
\begin{equation}
 S_{\rm min}=\frac{\Mpl^2}{4}\int
 \left(\epsilon_{IJKL}+2\beta\,\eta_{I[K}\eta_{L]J}\right)
 e^I\wedge e^J\wedge R^{KL}(\omega)
 -\int d^4x\sqrt{-g}\,V(\beta)+S_m[g,\Psi].
 \label{eq:minimal_first_order}
\end{equation}
The choice of $\beta$ or its reciprocal is conventional and does not affect the
argument below.  Writing $\omega=\mathring\omega(e)+C$, the connection
equation is algebraic.  In the absence of fermionic spin sources its solution
is
\begin{equation}
 C_{\mu IJ}=-\frac{1}{2(1+\beta^2)}
 \left(\epsilon_{IJ}{}^{KL}e_{\mu K}\partial_L\beta
 -2\beta e_{\mu[I}\partial_{J]}\beta\right).
 \label{eq:minimal_contorsion}
\end{equation}
Substitution into Eq.~\eqref{eq:minimal_first_order}, including the quadratic
contorsion terms, gives \cite{Taveras2008,TorresGomez2009}
\begin{equation}
 S_{\rm min}^{\rm eff}=\int d^4x\sqrt{-g}\left[
 \frac{\Mpl^2}{2}R
 -\frac{3\Mpl^2}{4(1+\beta^2)}(\partial\beta)^2
 -V(\beta)\right]+S_m[g,\Psi].
 \label{eq:minimal_metric}
\end{equation}
The field redefinition
$d\varphi/d\beta=\sqrt{3/2}\,\Mpl/(1+\beta^2)^{1/2}$ makes the scalar kinetic
term canonical.  Equation~\eqref{eq:minimal_metric} is therefore GR with a
minimally coupled pseudoscalar, not a theory with a running Planck mass.

For a flat FLRW background and transverse-traceless perturbations
$g_{ij}=a^2(\delta_{ij}+h_{ij})$, the quadratic tensor action is
\begin{equation}
 S_T^{(2)}=\frac{1}{8}\sum_{\lambda=+,\times}\int dt\,d^3x\,
 a^3 Q_T\left[\dot h_\lambda^2-
 \frac{c_T^2}{a^2}(\boldsymbol\nabla h_\lambda)^2\right],
 \qquad Q_T=\Mpl^2,\quad c_T^2=1.
 \label{eq:minimal_tensor_action}
\end{equation}
Each Fourier mode obeys
\begin{equation}
 \ddot h_\lambda+3H\dot h_\lambda+\frac{k^2}{a^2}h_\lambda=0.
 \label{eq:minimal_wave}
\end{equation}
Consequently,
\begin{equation}
 \nu\equiv\frac{d\ln Q_T}{d\ln a}=0,
 \qquad \frac{d_L^{\rm gw}}{d_L^{\rm em}}=1.
 \label{eq:no_go}
\end{equation}
This conclusion is nonperturbative with respect to the homogeneous field value and velocity of
$\beta$ within the two-derivative action \eqref{eq:minimal_first_order}.  The
BI stress tensor can change $H(z)$ and hence the common electromagnetic and GW
distance, but it cannot create a relative amplitude damping.  This is the structural propagation result used throughout the remainder of the paper.

The result already indicates the structural origin of this propagation property: the
Holst coefficient can alter the torsion solution and the scalar sector without
changing the parity-even coefficient multiplying the tensor kinetic operator.
Section~\ref{sec:completion} now makes this statement precise for the
curvature-linear first-order action class considered here.

\section{Curvature-linear first-order completion and the propagation criterion}
\label{sec:completion}

\subsection{Mother action and elimination of the connection}

To obtain a running tensor normalization, the coefficient of the
Hilbert--Palatini term must itself vary.  We consider the parity-invariant
first-order action
\begin{align}
 S_{\rm 1st}={}&\frac{\Mpl^2}{4}\int
 \left[A(x)\epsilon_{IJKL}+2B(x)\eta_{I[K}\eta_{L]J}\right]
 e^I\wedge e^J\wedge R^{KL}(\omega) \nonumber\\
 &-\int d^4x\sqrt{-g}\left[\frac{\Mpl^2}{2}Z(x)(\partial x)^2
 +U(x)\right]+S_m[g,\Psi].
 \label{eq:mother_action}
\end{align}
Here $x$ is a dimensionless BI pseudoscalar; parity requires $A$, $Z$, and
$U$ to be even and $B$ to be odd.  The ratio
\begin{equation}
 \gamma(x)\equiv \frac{B(x)}{A(x)}
 \label{eq:gamma_def}
\end{equation}
is the field-dependent Holst-to-Palatini coefficient.  Actions of this form
are a restricted, bosonic sector of general Einstein--Cartan scalar theories
\cite{Shaposhnikov2020,Karananas2021,Langvik2021,Rigouzzo2022}.

After the Weyl transformation $g^E_{\mu\nu}=A(x)g_{\mu\nu}$, the connection
equation has the same algebraic form as Eq.~\eqref{eq:minimal_contorsion} with
$\beta\rightarrow\gamma(x)$.  Eliminating the connection gives
\begin{align}
 S_E=\int d^4x\sqrt{-g_E}\bigg[&\frac{\Mpl^2}{2}R_E
 -\frac{\Mpl^2}{2}k_E(x)(\partial_E x)^2
 -\frac{U(x)}{A(x)^2}\bigg]
 +S_m[A^{-1}g_E,\Psi],
 \label{eq:Einstein_action}\\
 k_E(x)={}&\frac{Z(x)}{A(x)}+
 \frac{3}{2}\frac{\gamma_{,x}^2}{1+\gamma(x)^2}.
 \label{eq:kE_general}
\end{align}
The second term in Eq.~\eqref{eq:kE_general} is the calculable remnant of the
algebraic torsion.  Thus the same first-order operator that identifies $x$ as
a BI field fixes part of its effective kinetic metric; the construction is
not obtained by merely relabelling an arbitrary metric scalar.

Transforming Eq.~\eqref{eq:Einstein_action} back to the matter (Jordan) frame
gives
\begin{equation}
 S_J=\int d^4x\sqrt{-g}\left[\frac{\Mpl^2}{2}A(x)R
 -\frac{\Mpl^2}{2}k_J(x)(\partial x)^2-U(x)\right]+S_m[g,\Psi],
 \label{eq:Jordan_action}
\end{equation}
with
\begin{equation}
 k_J=A\left[k_E-\frac{3}{2}\left(\frac{A_{,x}}{A}\right)^2\right].
 \label{eq:kJ}
\end{equation}
For $A>0$, the Einstein-frame action makes the invariant two-derivative
kinetic criterion transparent: the tensor kinetic coefficient is positive and
the scalar is non-ghost if $k_E>0$.  The sign of $k_J$ by itself is not a
frame-invariant no-ghost criterion because the Jordan-frame operator $A(x)R$
contains scalar--metric kinetic mixing.  Equivalently,
$k_E=k_J/A+\tfrac32(A_{,x}/A)^2$, so a negative $k_J$ need not imply a
negative physical scalar kinetic eigenvalue.  We will nevertheless use $k_J>0$
later as an \emph{additional} restriction defining one diagnostic subclass;
the explicit benchmark happens to satisfy it.

\subsection{Canonical coupling and an action-level propagation criterion}

The elimination above separates two roles that are sometimes conflated in
phenomenological treatments.  The ratio $\gamma=B/A$ determines the algebraic
torsion and therefore contributes to the scalar kinetic metric in
Eq.~\eqref{eq:kE_general}.  By contrast, the parity-even coefficient $A(x)$
multiplies the metric curvature operator and hence controls the normalization
of the propagating tensor mode.  The formal theorem and its proof are given in
Appendix~\ref{app:propagation_theorem}; here we quote the result in the form
used throughout the main text.

For the action~\eqref{eq:mother_action}, assuming a spatially flat FLRW
background, a homogeneous $x(t)$, bosonic matter minimally coupled to the
metric, and regular algebraic elimination of the Lorentz connection, the exact
two-derivative quadratic action for linear transverse-traceless perturbations
is
\begin{equation}
 S_T^{(2)}=\frac{\Mpl^2}{8}\sum_\lambda\int dt\,d^3x\,
 a^3A(x)\left[\dot h_\lambda^2-
 \frac{(\boldsymbol\nabla h_\lambda)^2}{a^2}\right].
 \label{eq:prop_theorem_action}
\end{equation}
Therefore the tensor kinetic normalization, tensor speed, and friction
parameter are
\begin{equation}
 Q_T=\Mpl^2 A(x),\qquad c_T^2=1,\qquad
 \nu=\frac{d\ln A}{d\ln a}.
 \label{eq:prop_theorem_local}
\end{equation}
Here $Q_T$ is the coefficient of the tensor kinetic term,
$c_T$ is the propagation speed of the tensor mode, and
$\nu$ characterizes the extra Hubble-friction contribution in the wave
equation.  In the geometric-optics limit, if gravitational-wave and
electromagnetic luminosity distances are defined in the same Jordan-frame
background and we write $A(z)\equiv A[x(z)]$ and $A_0\equiv A[x(z=0)]$, then
\begin{equation}
 \frac{d_L^{\rm gw}(z)}{d_L^{\rm em}(z)}=
 \sqrt{\frac{A_0}{A(z)}}.
 \label{eq:prop_theorem_distance}
\end{equation}
This relation makes the physical separation transparent: $B(x)$ can influence
GW observables indirectly through the background solution $x(z)$, but within
Eq.~\eqref{eq:mother_action} it does not introduce an independent tensor
kinetic, gradient, or friction operator.

The minimal bosonic two-derivative dynamical-Holst theory is the special case
$A=1$.  It therefore satisfies $Q_T=\Mpl^2$, $c_T=1$, and $\nu=0$, so the
propagation-defined luminosity distances obey $d_L^{\rm gw}=d_L^{\rm em}$ for
any homogeneous BI-field history allowed by that action, provided there is no
connection-coupled spin current and no additional curvature or torsion
operators.  The minimal result is thus a direct specialization of the general
criterion rather than a property of one particular background solution.

The local tensor statement does not require $k_E>0$; that inequality is a
separate scalar-health condition.  The functions $B$, $Z$, and $U$ can still
affect GW observations indirectly by changing $x(z)$, the expansion history,
or source environments.  Equation~\eqref{eq:prop_theorem_distance} is a pure
\emph{propagation} relation and does not include possible changes to binary
waveform generation, screening, or detector/source couplings.  Likewise, the
criterion applies specifically to Eq.~\eqref{eq:mother_action}.  Extensions
with curvature-squared terms, derivative torsion, independent propagating
connection modes, Nieh--Yan terms, or direct spin/matter couplings to the
connection lie outside its scope.

The canonically normalized Einstein-frame field is defined by
\begin{equation}
 \frac{d\chi}{dx}=\Mpl\sqrt{k_E(x)}.
 \label{eq:canonical_field}
\end{equation}
Since matter couples to $A^{-1}g_E$, its dimensionless scalar coupling is
\begin{equation}
 \alpha(\chi)=-\frac{\Mpl}{2}\frac{d\ln A}{d\chi}
 =-\frac{A_{,x}}{2A\sqrt{k_E}}.
 \label{eq:matter_coupling}
\end{equation}
This expression, rather than a derivative with respect to a noncanonical
field, is the physical Einstein-frame coupling.  An even $A(x)$ automatically
gives $\alpha(0)=0$.

Equation~\eqref{eq:prop_theorem_action} gives $Q_T=\Mpl^2A$ and $c_T=1$.
The corresponding source-free Fourier-mode equation is
\begin{equation}
 \ddot h_\lambda+[3+\nu(t)]H\dot h_\lambda+
 \frac{k^2}{a^2}h_\lambda=0,
 \qquad \nu\equiv\frac{\dot A}{HA}=\frac{d\ln A}{d\ln a}.
 \label{eq:nonminimal_wave}
\end{equation}
In the geometric-optics limit the amplitude is proportional to
$(a\sqrt{A})^{-1}$, so that, with the same shorthand $A(z)=A[x(z)]$,
\begin{equation}
 \Xi(z)=\frac{d_L^{\rm gw}}{d_L^{\rm em}}
 =\sqrt{\frac{A_0}{A(z)}}
 =\exp\left[+\frac12\int_0^z\frac{\nu(z')}{1+z'}dz'\right].
 \label{eq:distance_map}
\end{equation}
Unlike Eq.~\eqref{eq:no_go}, this is a genuine nonminimal propagation effect.
For $A=\mathrm{const.}$, Eq.~\eqref{eq:distance_map} reduces identically to the
GR propagation law even if $B(x)$ and the torsion-induced scalar kinetic term
are strongly field dependent.

\subsection{Scalar principal part and stability conditions}
\label{sec:scalar_stability}

The Einstein-frame reduction also makes the scalar stability conditions
transparent.  Defining the canonical field by Eq.~\eqref{eq:canonical_field}
and the Einstein-frame potential by
\begin{equation}
 V_E(\chi)\equiv \frac{U[x(\chi)]}{A[x(\chi)]^2},
 \label{eq:VE_def}
\end{equation}
the gravitational--scalar sector becomes Einstein gravity plus a canonical
scalar.  Consequently, provided $A>0$ and $k_E>0$, the intrinsic Einstein-frame
scalar-field principal symbol has the standard sign and unit characteristic
speed,
\begin{equation}
 S^{(2)}_{\delta\chi,\,\mathrm{principal}}
 =\frac12\int dt_E\,d^3x\,a_E^3
 \left[(\delta\dot\chi)^2-\frac{(\boldsymbol\nabla\delta\chi)^2}{a_E^2}\right],
 \qquad c_s^2=1.
 \label{eq:scalar_principal}
\end{equation}
Thus $A>0$ and $k_E>0$ exclude a tensor kinetic ghost and a scalar kinetic
ghost and give the canonical short-wavelength scalar-field gradient sign in
the gravity--scalar subsystem.  Matter degrees of freedom retain their own
principal characteristics, and the coupled cosmological system can contain
additional low-frequency or matter-sector instabilities.  The effective mass
$d^2V_E/d\chi^2$, mixing with matter perturbations, long-wavelength behavior,
and local screening constraints must therefore be checked separately.  The
sign of $k_J$ is not substituted for this invariant criterion.  We describe
the benchmark below as \emph{kinetically healthy at the two-derivative level},
not as a globally stable cosmology.

To motivate a minimal nonminimal completion, assume analyticity around the
parity-symmetric point $x=0$.  Parity requires $A$, $Z$, and $U$ to be even
and $B$ (hence $\gamma=B/A$) to be odd.  Fixing the present gravitational
normalization gives $A=1+\xi x^2+\mathcal O(x^4)$.  On the generic analytic
branch with $\gamma_{,x}(0)\neq0$, a nonsingular field rescaling can also set
$\gamma=x+\mathcal O(x^3)$.  (If the linear odd coefficient vanishes, the
leading Holst ratio starts at cubic or higher order and is a different EFT
branch.)  We retain the leading terms of the nondegenerate branch as a
concrete two-derivative benchmark.  The following ansatz is therefore not the
unique BI completion; it is the lowest-order analytic parity-invariant
representative of this branch that leaves the minimal Holst submanifold and
allows a running tensor normalization.

For the phenomenological catalog recast we use the standard two-parameter interpolation
\begin{equation}
 \Xi(z)=\Xi_0+\frac{1-\Xi_0}{(1+z)^n},
 \label{eq:Xi_ansatz}
\end{equation}
where $\Xi_0$ denotes the asymptotic high-redshift value and $n$ controls the transition rate.

For a concrete one-parameter curvature structure we take
\begin{equation}
 A(x)=1+\xi x^2,\qquad B(x)=A(x)x,\qquad Z(x)=1,
 \label{eq:benchmark_functions}
\end{equation}
so that $\gamma=x$ and
\begin{equation}
 k_E=\frac{1}{1+\xi x^2}+\frac{3}{2(1+x^2)},\qquad
 k_J=A\left[k_E-\frac{6\xi^2x^2}{A^2}\right].
 \label{eq:benchmark_kinetics}
\end{equation}
Equations~\eqref{eq:benchmark_functions}--\eqref{eq:benchmark_kinetics}
provide a definite microscopic matching for the GW damping.  A siren bound
constrains the evolution of $A=1+\xi x^2$, not $x$ or $\xi$ separately.

\begin{table}[t]
\caption{Minimal and curvature-dressed BI sectors.  ``Algebraic torsion''
means that no independent connection mode propagates.  All entries are derived
in this work from Eqs.~\eqref{eq:minimal_first_order}--
\eqref{eq:benchmark_kinetics}; the minimal connection reduction follows
Refs.~\cite{Taveras2008,TorresGomez2009}.  No external numerical data enter
this table.}
\label{tab:models}
\centering
\begin{tabularx}{\columnwidth}{@{}YYY@{}}
\toprule
Property & Minimal dynamical Holst & First-order completion \\
\midrule
Palatini coefficient & $A=1$ & $A(x)=1+\xi x^2$ \\
Holst coefficient & $B=\beta$ & $B=A x$ \\
Holst-to-Palatini role & scalar kinetic metric & scalar kinetic metric \\
Connection & algebraic torsion & algebraic torsion \\
Tensor normalization & $Q_T=\Mpl^2$ & $Q_T=\Mpl^2A$ \\
GW speed & $c_T=1$ & $c_T=1$ \\
Distance ratio & $\Xi=1$ & $\Xi=\sqrt{A_0/A}$ \\
Scalar kinetic condition & automatic & $A>0$, $k_E>0$ \\
Scalar principal speed & $c_s^2=1$ & $c_s^2=1$ \\
\bottomrule
\end{tabularx}
\end{table}

\section{Background consistency}
\label{sec:background}

\subsection{Jordan-frame equations}

For pressureless matter and radiation, the flat-FLRW equations following
from Eq.~\eqref{eq:Jordan_action} are
\begin{align}
 3\Mpl^2AH^2={}&\rho_m+\rho_r+
 \frac{\Mpl^2}{2}k_J\dot x^2+U-3\Mpl^2H\dot A,
 \label{eq:Friedmann}\\
 -2\Mpl^2A\dot H={}&\rho_m+\frac43\rho_r+
 \Mpl^2k_J\dot x^2+\Mpl^2\ddot A-\Mpl^2H\dot A.
 \label{eq:Raychaudhuri}
\end{align}
They also show why an arbitrary curve $A(z)$ need not belong to a chosen
background subclass: after $H(z)$ is specified, Eq.~\eqref{eq:Raychaudhuri}
fixes the Jordan-frame combination $k_J\dot x^2$.  Its sign is useful for the
restricted diagnostic below, but is not by itself the invariant scalar
no-ghost test.

For a diagnostic, impose a flat $\Lambda$CDM expansion, neglect radiation at
$z=0$, set $A_0=1$, and use the ansatz \eqref{eq:Xi_ansatz} with
$A(z)=\Xi(z)^{-2}$.  Equation~\eqref{eq:Raychaudhuri} then gives
\begin{equation}
 X_0\equiv\frac{k_J\dot x_0^2}{H_0^2}
 =-2nA_\Xi(1+\epsilon_0-n)-6n^2A_\Xi^2,
 \quad A_\Xi\equiv1-\Xi_0,
 \quad\epsilon_0\equiv\frac32\Omega_{m0}.
 \label{eq:X0_test}
\end{equation}
For $k_J>0$, $X_0\geq0$ is necessary.  Apart from the GR root, the boundary is
\begin{equation}
 \Xi_{\rm b}(n)=\frac23+\frac{1+\epsilon_0}{3n};
 \label{eq:feasible_boundary}
\end{equation}
the necessary band lies between $\Xi_0=1$ and $\Xi_{\rm b}$.  With
$\Omega_{m0}=0.3065$ and $n=1$, the often-used examples $\Xi_0=0.5$ and $1.9$
give $X_0=-1.96$ and $-4.03$, respectively.  They are useful kinematic shapes
but do not lie in the additionally restricted fixed-$\Lambda$CDM, $k_J>0$
subclass.  Equations~\eqref{eq:X0_test}--\eqref{eq:feasible_boundary} therefore
define only a present-day compatibility test for that subclass; they are not
a frame-invariant no-ghost criterion, a sufficient global viability test, or
a model-selection likelihood.

\subsection{A dynamically consistent existence benchmark}

We now demonstrate that the first-order completion is not empty.  For
Eq.~\eqref{eq:benchmark_functions}, choose $\xi=1$ and prescribe the monotonic
trajectory, for $N\leq0$,
\begin{equation}
 x(N)=x_\star\tanh(q+cq^2),\qquad q=-\frac{N}{N_t},
 \quad x_\star=0.15,\quad N_t=2.
 \label{eq:trajectory}
\end{equation}
We integrate Eq.~\eqref{eq:Raychaudhuri} for
$y(N)\equiv H^2/H_0^2$ with $y(0)=1$, $\Omega_{m0}=0.3065$, and
$\Omega_{r0}=9\times10^{-5}$.  In dimensionless form,
\begin{equation}
 y'=-\frac{3\Omega_{m0}e^{-3N}+4\Omega_{r0}e^{-4N}
 +y[k_Jx'^2+A''-A']}{A+A'/2}.
 \label{eq:y_ode}
\end{equation}
The Friedmann equation then reconstructs
\begin{equation}
 u(N)\equiv\frac{U}{\Mpl^2H_0^2}
 =3Ay-3\Omega_{m0}e^{-3N}-3\Omega_{r0}e^{-4N}
 -\frac12k_Jyx'^2+3yA'.
 \label{eq:potential_reconstruction}
\end{equation}
For Table~\ref{tab:benchmark} we use the Jordan-frame effective scalar
fraction
\begin{equation}
 \Omega_x\equiv
 \frac{\Mpl^2 k_J\dot x^2/2+U-3\Mpl^2H\dot A}
 {3\Mpl^2AH^2}
 =1-\frac{\rho_m+\rho_r}{3\Mpl^2AH^2}.
 \label{eq:Omega_x}
\end{equation}
This definition assigns the nonminimal term to the effective scalar sector;
other Jordan-frame energy-density splits are possible, so the convention is
stated explicitly.
The value $c=2.52741375$ enforces the parity boundary condition
$U_{,x}(0)=0$.  Because $x(N)$ is monotonic, Eq.~\eqref{eq:potential_reconstruction}
defines a single-valued potential on the sampled branch; it can be extended
as $U(-x)=U(x)$.  The scalar equation follows from the Bianchi identity for
$x'\ne0$, while the imposed boundary condition makes it regular at $x=0$.

\begin{figure*}[t]
\centering
\includegraphics[width=0.98\textwidth]{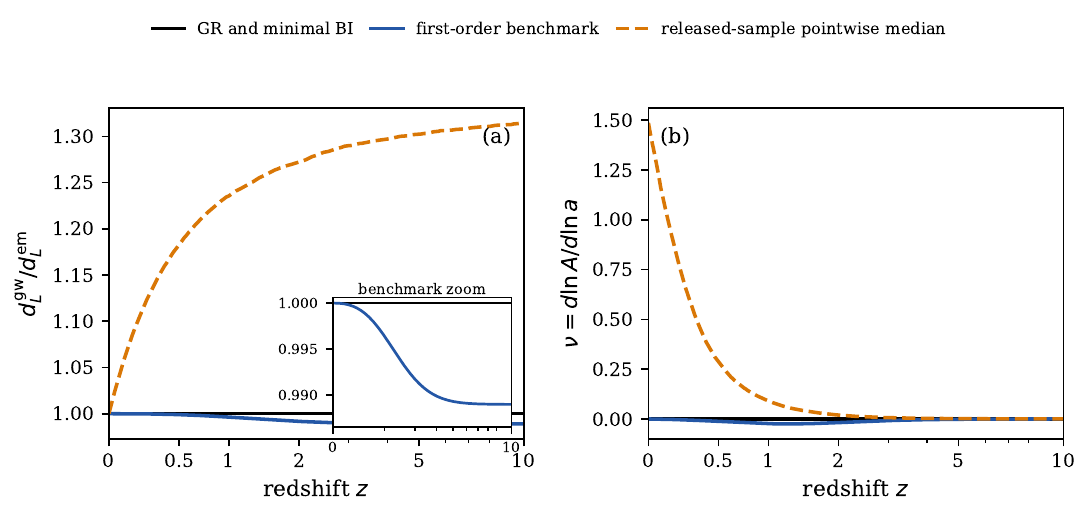}
\caption{Propagation observables.  (a) GR and the minimal BI theory have
$\Xi=1$.  The solid blue curve is the first-order benchmark of
Eq.~\eqref{eq:trajectory}; the inset resolves its percent-level change.  The
dashed curve is the pointwise weighted median of Eq.~\eqref{eq:Xi_ansatz}
computed from the released GWTC-4.0 samples, preserving the sampled
$\Xi_0$--$n$ covariance.  (b) The corresponding friction $\nu$.  The
kinematic posterior summary is not asserted to satisfy the background
equations.}
\label{fig:distance}
\end{figure*}

\begin{figure*}[t]
\centering
\includegraphics[width=0.96\textwidth]{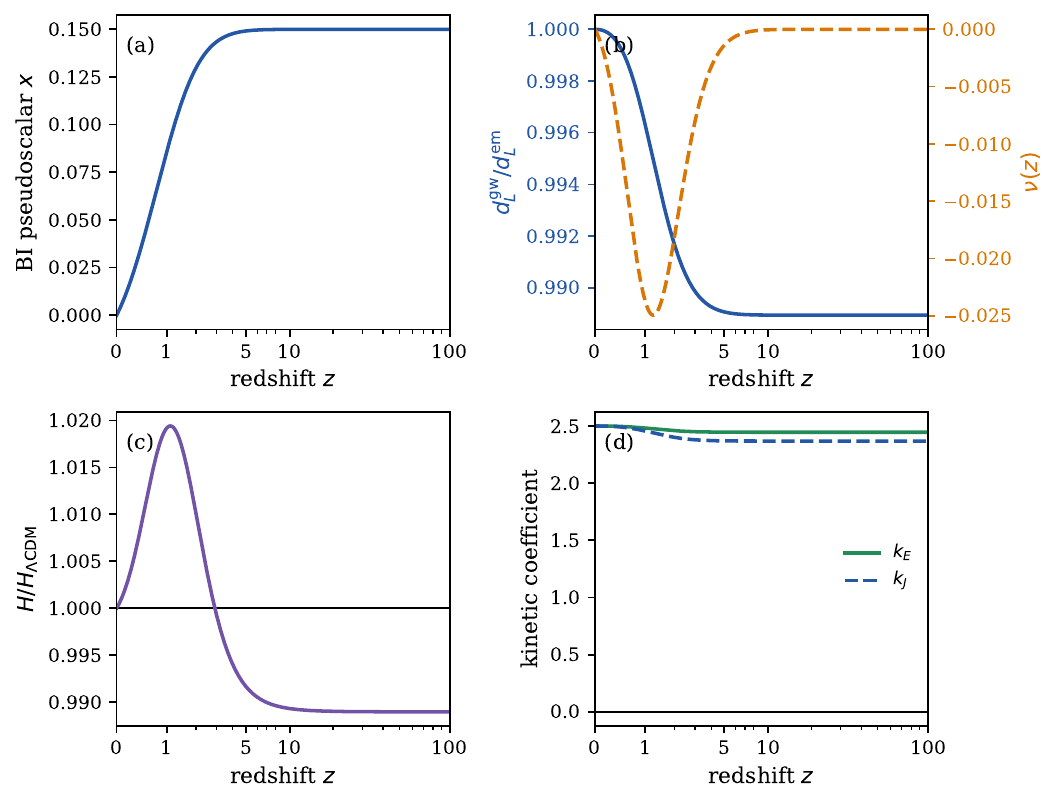}
\caption{Self-consistent background for Eqs.~\eqref{eq:benchmark_functions}
and \eqref{eq:trajectory}: (a) BI pseudoscalar, (b) distance ratio and friction,
(c) expansion relative to flat $\Lambda$CDM with the same present density
parameters, and (d) Einstein- and Jordan-frame kinetic functions.  The
integration extends to $z=1100$; $k_E$ and $k_J$ remain positive.}
\label{fig:benchmark}
\end{figure*}

The numerical checks are collected in Table~\ref{tab:benchmark}.  The model
has $\alpha_0=0$ and $\nu_0=0$ because $A_{,x}(0)=0$, even though the field
crosses $x=0$.  The reconstructed potential is positive on the sampled branch,
and $A>0$ and $k_E>0$ hold throughout $0\leq z\leq1100$; the optional stronger
property $k_J>0$ also happens to hold.  Together with
Eq.~\eqref{eq:scalar_principal}, this verifies the two-derivative kinetic and
intrinsic short-wavelength scalar-field gradient conditions checked here.  It
is an existence proof, not a fit to background, large-scale-structure, or
Solar-System data.  Its role is to
replace unsupported ``viable example'' curves with a solution that actually
satisfies Eqs.~\eqref{eq:Friedmann}--\eqref{eq:Raychaudhuri}.

\begin{table}[t]
\caption{Diagnostics for the self-consistent benchmark.  Source: numerical
integration of Eqs.~\eqref{eq:y_ode}--\eqref{eq:potential_reconstruction} in
this work; the tabulated values are recorded in
\texttt{output/background\_diagnostics.json}.}
\label{tab:benchmark}
\centering
\begin{tabular}{@{}lc@{}}
\toprule
Quantity & Value \\
\midrule
$\Xi(z\rightarrow\infty)$ & $0.988936$ \\
$\min A$ & $1.000000$ \\
$\min k_E$ & $2.44499$ \\
$\min k_J$ & $2.36797$ \\
$\min[U/(\Mpl^2H_0^2)]$ & $2.07320$ \\
$\max_{z\leq10}|H/H_{\Lambda{\rm CDM}}-1|$ & $1.94\%$ \\
$\Omega_x(z=10)$ & $2.39\times10^{-3}$ \\
$\Omega_x(z=100)$ & $2.97\times10^{-6}$ \\
$\alpha_0$, $\nu_0$ & $0$, $0$ \\
\bottomrule
\end{tabular}
\end{table}

\section{Standard-siren implications and catalog recast}
\label{sec:data}

\subsection{Catalog results and template-level endpoint map}

The catalog analysis below is deliberately a phenomenological application of
the propagation theorem, not a direct likelihood analysis of the specific
background in Sec.~\ref{sec:background}.  In particular, the explicit BI
benchmark predicts a full function $A(z)$, whereas the public catalog samples
were obtained with the two-parameter template~\eqref{eq:Xi_ansatz}.  We
therefore use the catalog products to identify which combinations of the
running tensor normalization are currently constrained and to expose prior and
population degeneracies; we do not reinterpret them as a posterior for
$\xi$ or for the BI field amplitude.

For $A_0=1$, the asymptotic quantities related to Eq.~\eqref{eq:Xi_ansatz}
are
\begin{equation}
 \frac{A_\infty}{A_0}=\Xi_0^{-2},\qquad
 \Delta_M\equiv\ln\frac{A_\infty}{A_0}=-2\ln\Xi_0.
 \label{eq:exact_recast}
\end{equation}
This monotonic transformation is exact within the phenomenological
parameterization~\eqref{eq:Xi_ansatz} and contains no effective-redshift or
Gaussian approximation.  It is not, however, a direct parameter constraint on
the explicit trajectory defined by Eqs.~\eqref{eq:benchmark_functions} and
\eqref{eq:trajectory}: the catalog likelihood depends on the full redshift
evolution of $A(z)$, so that model requires a dedicated inference.

The GWTC-3 population analysis of Ref.~\cite{Mancarella2022} reported the 90\%
intervals $\Xi_0\in[0.2,2.7]$ under a flat $\Xi_0$ prior and
$\Xi_0\in[0.1,1.9]$ under a flat $\ln\Xi_0$ prior.  Their exact endpoint maps
are shown in Table~\ref{tab:catalog_intervals}.  These are images of the
published intervals; they need not be highest-density intervals in the
transformed coordinate.

The final published GWTC-4.0 cosmology paper reports
\begin{equation}
 \Xi_0=1.4^{+1.0}_{-0.4}\ 
 \bigl(1.4^{+2.8}_{-0.6}\bigr),
 \label{eq:gwtc4_headline}
\end{equation}
where the first and parenthesized ranges contain 68.3\% and 90\% probability,
respectively \cite{GWTC4cosmology}.  Thus the final 90\% interval
$[0.8,4.2]$ maps to $A_\infty/A_0\in[0.0567,1.5625]$ and
$\Delta_M\in[-2.870,0.446]$.  GR remains inside all quoted intervals.

\begin{table*}[t]
\caption{Exact endpoint transformation of published catalog intervals.  The
GWTC-3 rows use the intervals reported in Ref.~\cite{Mancarella2022} under two
different priors and therefore are not interchangeable.  The GWTC-4.0 rows use
the final published result of Ref.~\cite{GWTC4cosmology}.  All transformed
columns are calculated in this work using Eq.~\eqref{eq:exact_recast}.}
\label{tab:catalog_intervals}
\centering
\begin{tabular}{@{}llll@{}}
\toprule
Analysis & interval in $\Xi_0$ & interval in $A_\infty/A_0$ &
interval in $\Delta_M$ \\
\midrule
GWTC-3, flat $\Xi_0$ (90\%) & $[0.2,2.7]$ & $[0.137,25.0]$ & $[-1.987,3.219]$ \\
GWTC-3, flat $\ln\Xi_0$ (90\%) & $[0.1,1.9]$ & $[0.277,100]$ & $[-1.284,4.605]$ \\
GWTC-4.0 final (68.3\%) & $[1.0,2.4]$ & $[0.174,1.000]$ & $[-1.751,0]$ \\
GWTC-4.0 final (90\%) & $[0.8,4.2]$ & $[0.0567,1.5625]$ & $[-2.870,0.446]$ \\
\bottomrule
\end{tabular}
\end{table*}

\subsection{Released posterior samples and prior Jacobian}

We additionally analyze the two public GWTC-4.0 hyperposterior files
\path{icarogw_dark_Xi0CDM_multipop.json} and
\path{gwcosmo_dark_Xi0CDM_multipop.json} from Zenodo record 16919645
\cite{GWTC4data}.  For a transparent release-level comparison, we form an
illustrative mixture that assigns equal total weight to the two pipeline sample
sets.  This is not an official LVK pipeline combination.  For every sample we
calculate Eq.~\eqref{eq:exact_recast}; no random surrogate samples are
generated.  The pipeline-specific $\Xi_0$ medians are 1.214 (\texttt{icarogw})
and 1.497 (\texttt{gwcosmo}), which motivates keeping their identities visible
rather than treating the mixture as a new catalog measurement.

The release used uniform priors $\Xi_0\in[0.435,10]$ and $n\in[0.1,10]$.
A posterior quoted under a prior uniform in $\Delta_M$ is a different
inference.  On the same likelihood samples it is obtained by importance weights
\begin{equation}
 w_{\rm flat\ \Delta_M}\propto w_{\rm release}
 \left|\frac{d\Delta_M}{d\Xi_0}\right|
 \propto\frac{w_{\rm release}}{\Xi_0}.
 \label{eq:prior_reweight}
\end{equation}
The reweighting is restricted to the inherited support
$\Delta_M\in[-2\ln 10,-2\ln 0.435]$ and performs no extrapolation beyond the
released samples.
The resulting effective sample size is 8611 out of 11314, so the numerical
reweighting is not dominated by a handful of samples.  Table~\ref{tab:sample_results}
reports equal-tailed summaries for each pipeline and for the mixture.  The
pipeline-specific 68.3\% intervals, $[0.839,1.993]$ for \texttt{icarogw} and
$[0.990,3.288]$ for \texttt{gwcosmo}, show that the released analyses differ
most strongly in the high-$\Xi_0$ tail.  The additional shift between the two
mixture rows is a property of the chosen prior measure, not new observational
information.

The released pipeline files and the final journal result are versioned data
products.  Their equal-weight sample mixture has median $\Xi_0=1.336$ and
does not exactly reproduce Eq.~\eqref{eq:gwtc4_headline}.  We therefore use
the final paper for the official headline constraint and use the named public
files only for posterior-shape, covariance, Jacobian, and feasibility
diagnostics.  This separation prevents a release-level calculation from
being presented as an exact reconstruction of the final collaboration
combination.

\begin{table*}[t]
\caption{Sample-level recast of the two named GWTC-4.0 public posterior files,
their illustrative equal-pipeline mixture, and the prior-reweighted mixture.
Source: the \texttt{icarogw} and \texttt{gwcosmo} hyperposteriors in
Ref.~\cite{GWTC4data}.  Entries are the median, central 68.3\% interval, and
central 90\% interval computed with the accompanying script.  These
release-sample summaries are distinct from the final published headline in
Eq.~\eqref{eq:gwtc4_headline}.}
\label{tab:sample_results}
\centering
\small
\setlength{\tabcolsep}{4pt}
\begin{tabular}{@{}llll@{}}
\toprule
Sample and prior & $\Xi_0$ & $\Delta_M$ & $A_\infty/A_0$ \\
\midrule
\texttt{icarogw}, flat $\Xi_0$ &
$\shortstack{$1.214^{+0.779}_{-0.375}$\\[1pt]\scriptsize$[0.670,3.596]$}$ &
$\shortstack{$-0.387^{+0.739}_{-0.992}$\\[1pt]\scriptsize$[-2.559,0.801]$}$ &
$\shortstack{$0.679^{+0.743}_{-0.427}$\\[1pt]\scriptsize$[0.0774,2.228]$}$ \\
\texttt{gwcosmo}, flat $\Xi_0$ &
$\shortstack{$1.497^{+1.791}_{-0.507}$\\[1pt]\scriptsize$[0.785,5.552]$}$ &
$\shortstack{$-0.807^{+0.826}_{-1.574}$\\[1pt]\scriptsize$[-3.428,0.484]$}$ &
$\shortstack{$0.446^{+0.574}_{-0.354}$\\[1pt]\scriptsize$[0.0324,1.623]$}$ \\
Equal-pipeline mixture, flat $\Xi_0$ &
$\shortstack{$1.336^{+1.283}_{-0.435}$\\[1pt]\scriptsize$[0.716,4.727]$}$ &
$\shortstack{$-0.580^{+0.788}_{-1.346}$\\[1pt]\scriptsize$[-3.107,0.668]$}$ &
$\shortstack{$0.560^{+0.672}_{-0.414}$\\[1pt]\scriptsize$[0.0447,1.950]$}$ \\
Equal-pipeline mixture, flat $\Delta_M$ &
$\shortstack{$1.112^{+0.586}_{-0.334}$\\[1pt]\scriptsize$[0.621,2.675]$}$ &
$\shortstack{$-0.213^{+0.713}_{-0.847}$\\[1pt]\scriptsize$[-1.968,0.953]$}$ &
$\shortstack{$0.808^{+0.841}_{-0.462}$\\[1pt]\scriptsize$[0.140,2.593]$}$ \\
\bottomrule
\end{tabular}
\end{table*}

\begin{figure*}[t]
\centering
\includegraphics[width=0.98\textwidth]{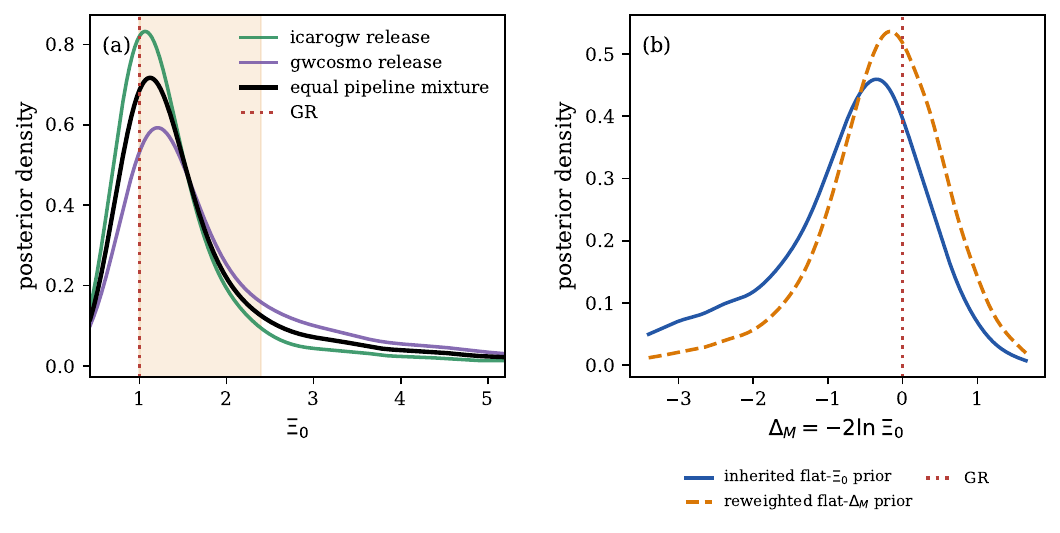}
\caption{GWTC-4.0 release-sample analysis.  (a) Pipeline-specific
$\Xi_0$ densities and their equal-weight mixture; the light orange region is
the final published 68.3\% interval $[1.0,2.4]$ and the vertical dotted line is
GR.  (b) Sample-by-sample transformation to $\Delta_M$ under the inherited
flat-$\Xi_0$ prior and after the Jacobian reweighting of
Eq.~\eqref{eq:prior_reweight}.}
\label{fig:posterior}
\end{figure*}

\begin{figure}[t]
\centering
\includegraphics[width=\columnwidth]{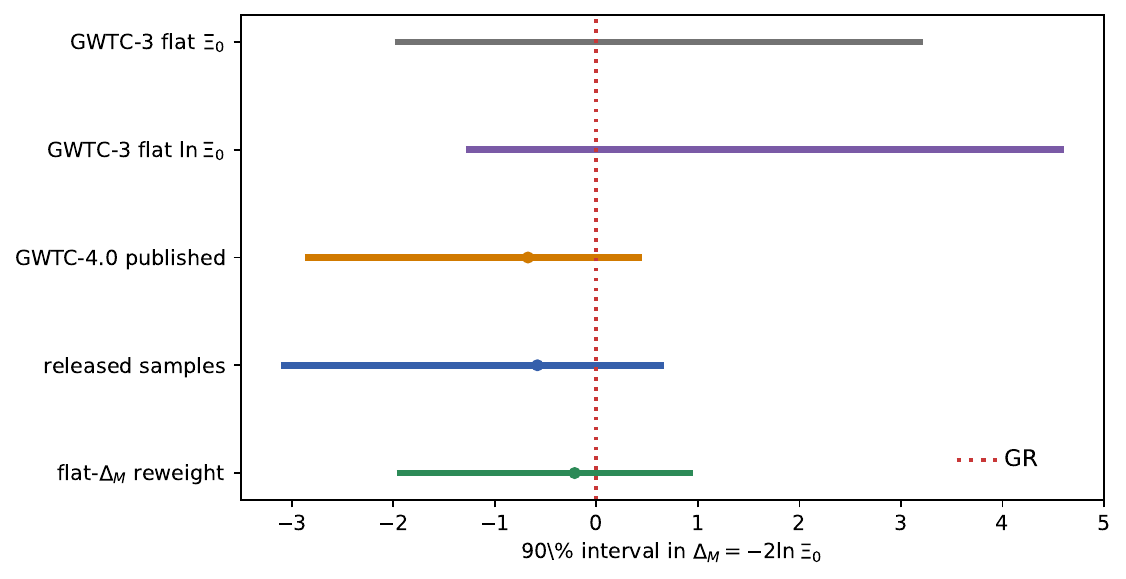}
\caption{Ninety-percent intervals expressed in the common coordinate
$\Delta_M$.  Markers are drawn only where a central reported value or a sample
median is available; no interval midpoint is interpreted as a posterior
estimate.}
\label{fig:catalogs}
\end{figure}

\section{Restricted background-consistency diagnostic and parameter
degeneracies}
\label{sec:posterior_feasibility}

As a deliberately restrictive diagnostic, for each released sample
$(\Xi_0,n)$ we evaluate Eq.~\eqref{eq:X0_test} using
$\Omega_{m0}=0.3065$.  Only 19.1\% of the illustrative equal-pipeline posterior mass passes the
present-day condition $X_0\geq0$ after imposing the additional fixed-$\Lambda$CDM,
$k_J>0$ subclass assumption.  Figure~\ref{fig:feasibility} displays both the
analytic boundary and the conditional density.  This retained fraction is
neither a Bayes factor nor a posterior probability for scalar stability: the
frame-invariant scalar kinetic condition is $k_E>0$, and the original catalog
analysis imposed neither the BI background equations nor $k_J>0$.  A
model-consistent inference would require a new joint likelihood including the
background and perturbations.  Points outside the band therefore fail only
this restricted compatibility test; they are not excluded from the full
action class or from scalar--tensor theories in general.

\begin{figure*}[t]
\centering
\includegraphics[width=0.98\textwidth]{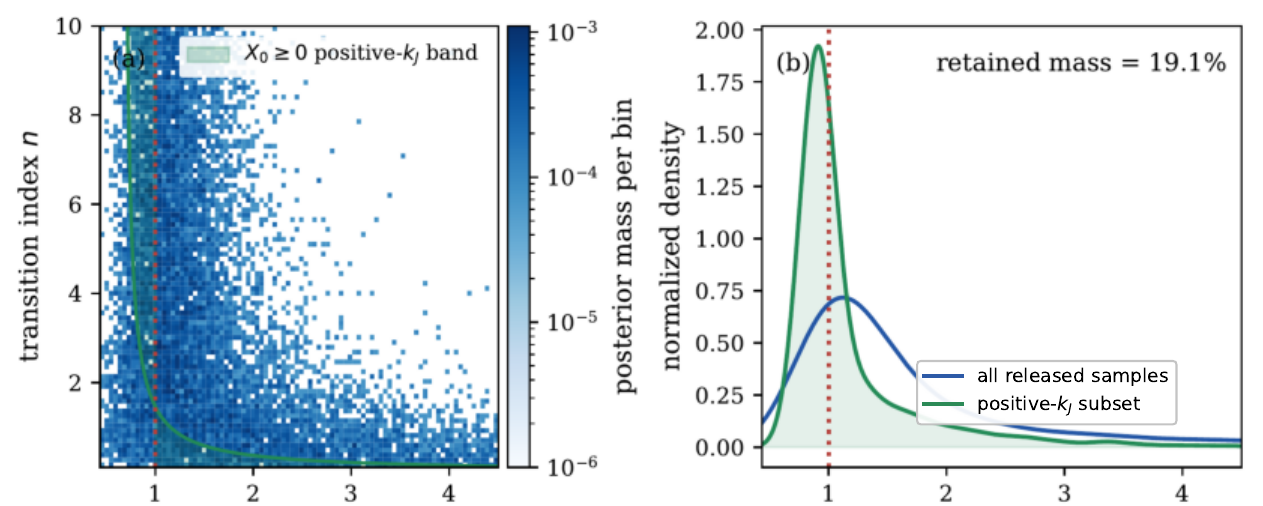}
\caption{Fixed-$\Lambda$CDM, positive-$k_J$ subclass diagnostic for the
released samples.
(a) Joint $(\Xi_0,n)$ posterior mass and the analytic $X_0\geq0$ necessary
band of Eq.~\eqref{eq:feasible_boundary}.  (b) Marginal $\Xi_0$ density before
and after conditioning on this test.  The 19.1\% retained mass is descriptive
and is not a model probability.}
\label{fig:feasibility}
\end{figure*}

The equal-mixture weighted correlations are
\begin{equation}
 \rho(\Xi_0,H_0)=0.308,\qquad
 \rho(\Xi_0,n)=-0.433,\qquad
 \rho(\Xi_0,\gamma_{\rm pop})=0.746,
 \label{eq:correlations}
\end{equation}
where $\gamma_{\rm pop}$ is the population-model parameter named
\texttt{gamma} in the release.  These correlations quantify why the
constraint cannot be interpreted as a one-dimensional measurement of a
microscopic BI coupling.  The luminosity-distance modification changes the
inferred source redshifts and therefore covaries with both cosmology and the
mass/redshift population model.

\section{Discussion}
\label{sec:discussion}

\subsection{Theoretical scope}

The main conclusions apply at different levels of generality.  The minimal-sector propagation result follows directly from the action and is exact within the
two-derivative dynamical-Holst theory without additional curvature couplings.
It does not depend on the BI potential, the background solution, or a
small-field expansion.  The first-order completion is more specific: it
selects the parity-even functions in Eq.~\eqref{eq:benchmark_functions} and
neglects fermions, Nieh--Yan couplings, curvature-squared operators, and direct
matter couplings.  Those extensions may produce additional observables, but
they must be matched at the action level rather than absorbed into an
effective friction by assertion.

The existence benchmark establishes regularity, $A>0$ and $k_E>0$, and the
canonical intrinsic scalar-field principal sign on one cosmological
trajectory; it additionally has $k_J>0$.  It does not establish a positive
effective mass, stability of the complete matter--scalar perturbation system,
nonlinear screening, radiative stability, or agreement with all cosmological
data.  A
full inference should solve the background and perturbation equations for
each parameter point and combine sirens with CMB, baryon-acoustic-oscillation,
supernova, growth, and local-gravity information.  The property
$A_{,x}(0)=0$ is useful because it gives $\alpha_0=\nu_0=0$, but local bounds
also depend on the scalar mass, environmental profile, and nonlinear dynamics.

\subsection{Comparison with related work}

Previous studies of a dynamical BI field developed its torsion-induced
scalar interactions, cosmological dynamics, couplings to fermions and
topological densities, and in some nonminimal torsionful models the morphology
of GW polarizations
\cite{Taveras2008,TorresGomez2009,Calcagni2009,MercuriTaveras2009,
Mercuri2009,Shaposhnikov2020,Karananas2021,Langvik2021,Rigouzzo2022,
Bombacigno2018,Bombacigno2019,Karananas2024}.  These results provide important
context but address different action classes or observables.  The specific
question isolated here is which coefficient of the curvature-linear BI action
controls the cosmological tensor amplitude after the algebraic connection has
been integrated out, and how a standard-siren constraint should be matched to
that coefficient.  Conversely, the standard modified-propagation literature
has developed the luminosity-distance parameterization and constrained it with
dark and bright sirens
\cite{Belgacem2018a,Belgacem2018b,Belgacem2019,Lagos2019,Finke2021,
Mastrogiovanni2021,Mukherjee2021,Mancarella2022,Leyde2022,ChenGray2024},
but generally treats the running gravitational coefficient
phenomenologically rather than deriving a BI first-order completion and
testing whether its background is dynamically realizable.  The present work combines these ingredients in one calculation: an exact
minimal-theory propagation result, an action-level propagation proposition for the
curvature-linear algebraic-connection class, explicit matching to a nonminimal
completion, a self-consistent background test, and a sample-level catalog
recast.  The resulting advance is therefore not another bound on the same
damping curve, but an action-level criterion identifying which BI action sectors
can consistently be constrained by gravitational-wave observations.

Phenomenological constructions that assign an additional GW damping while
retaining a minimal gravitational coefficient require particular care.
The tensor-sector reduction shows that such a propagation term cannot arise from
minimal algebraic BI torsion alone.  A
catalog-level damping constraint is physically interpretable in the BI sector
only after it is matched to an explicit nonminimal operator such as
Eq.~\eqref{eq:mother_action}.  This action-level distinction removes the
apparent conflict between a torsion-modified background and an unmodified
tensor normalization.  It also requires a microscopic reinterpretation of the effective-friction
ansatzes used in Refs.~\cite{GaoDarkSiren2026,GaoQuintessence2026}: within the
minimal dynamical-Holst action those propagation terms are not generated by
the BI field alone, whereas a running-friction description can consistently
arise after matching to a nonminimal completion with $\dot A\neq0$.

\subsection{Observational interpretation and limitations}

The statistical recast also has a deliberately limited meaning.  A monotonic
endpoint map is adequate for translating a reported interval, whereas a new
posterior coordinate requires its samples and prior measure.  Our
sample-by-sample calculation exposes both the posterior shape and the
Jacobian effect.  It does not reanalyze event strain data, reconstruct the
collaboration selection function, or supersede the final GWTC-4.0 result.
The released-sample $k_J>0$ cut is only the restricted compatibility
diagnostic defined above; it is neither a complete invariant stability condition nor
an observational exclusion quoted by LVK.

The observational information is nevertheless more revealing than a single-parameter
interval.  Table~\ref{tab:sample_results} shows that the two
public GWTC-4.0 pipelines have visibly different high-$\Xi_0$ tails, while
Fig.~\ref{fig:posterior} demonstrates that the inferred $\Delta_M$ distribution
also changes under a different prior measure.  The correlations with the
transition index and compact-binary population parameters in
Eq.~\eqref{eq:correlations} explain why
the current catalog cannot yet isolate a microscopic BI coupling.  Near-term
progress therefore requires more well-localized and higher-redshift sirens,
better calibration of source-population features and selection effects, and
joint bright-siren or galaxy-catalog information, rather than merely a tighter
one-parameter fit \cite{Finke2021,Mastrogiovanni2021,Mukherjee2021,
ChenGray2024,GWTC4cosmology}.

\subsection{Relation to phenomenological gravitational-wave propagation tests}
\label{sec:relation_pheno_sirens}

Phenomenological standard-siren analyses usually introduce a propagation
function or an effective tensor-normalization evolution and constrain possible
departures from general relativity directly from gravitational-wave catalogs.
Such analyses determine the observational parameter space, but the microscopic
origin of the effective propagation parameters is not specified.  In contrast,
the present work starts from a first-order Einstein--Cartan--Holst action and
identifies which coefficients of the fundamental gravitational action survive
as observable tensor-propagation operators after the Lorentz connection is
eliminated.

The distinction is essential for interpreting standard-siren constraints.
Within the minimal algebraic Holst sector, the Barbero--Immirzi ratio affects
the scalar sector rather than introducing an independent tensor propagation
operator.  Observable deviations of the gravitational-wave luminosity distance
therefore probe an extended parity-even curvature sector rather than directly
measuring the minimal Barbero--Immirzi parameter.

This action-level interpretation differs from approaches that first assume a
phenomenological propagation deviation and then constrain its parameters.  It
provides a microscopic criterion for identifying which first-order gravity
extensions can produce observable gravitational-wave signatures.

\section{Conclusion and Outlook}
\label{sec:conclusion}

We have established an action-level criterion for GW propagation in the
curvature-linear, algebraic-connection bosonic class of dynamical BI theories
considered in this work.  The key result is that the
Hilbert--Palatini coefficient and the Holst-to-Palatini ratio play different
roles after the algebraic connection is eliminated.  The former normalizes the
tensor kinetic operator, whereas the latter fixes a calculable contribution to
the scalar kinetic metric.  Hence a dynamical Holst coefficient at fixed
Hilbert--Palatini normalization does not generate an additional GW friction contribution.  The
minimal dynamical-Holst theory is an exact corollary: it reduces to GR plus a
minimally coupled pseudoscalar, has luminal tensor speed. Consequently, gravitational and electromagnetic
luminosity distances remain identical in this minimal sector.

A GW-distance modification requires leaving this minimal submanifold.  We did
so with the leading parity-invariant first-order completion, for which the BI
pseudoscalar also controls a nonminimal parity-even curvature coefficient.
Connection elimination fixes the torsional contribution to the effective
couplings, while the tensor distance is controlled by the evolution of the
effective tensor normalization.  We supplied a reconstructed background satisfying both Friedmann equations
and the invariant two-derivative kinetic conditions $A>0$ and $k_E>0$.  The
Einstein-frame principal scalar action establishes the canonical intrinsic
short-wavelength scalar-field gradient sign, while the optional property
$k_J>0$ is kept conceptually separate.  A distinct analytic test then shows
that commonly plotted kinematic curves need not belong to the restricted
fixed-$\Lambda$CDM, $k_J>0$ subclass.

Finally, the GWTC-3 and final GWTC-4.0 intervals were mapped exactly to
$A_\infty/A_0$ and $\Delta_M$.  The public GWTC-4.0 pipeline samples were
transformed individually, revealing substantial prior dependence and strong
population covariance.  Under the additionally restricted fixed-$\Lambda$CDM, $k_J>0$ diagnostic,
19.1\% of the released posterior mass passes the present-day compatibility
condition; this is neither a model/stability probability nor an observational
exclusion.  Current sirens therefore provide a consistency test for the mapping between
observational tensor-normalization evolution and microscopic BI action sectors,
rather than a direct measurement of an isolated BI parameter. This distinction
separates the present action-based approach from purely phenomenological
modified-friction fits, while the minimal dynamical-Holst theory lies on the
exact GR-propagation surface.

The next observational step is a model-specific hierarchical analysis in
which the phenomenological propagation description is replaced by the numerically
generated tensor-normalization evolution of the BI completion at every sampled parameter point.  The
same background must then enter the source-distance relation, selection
function, source-frame masses, and joint inference of cosmological and
population parameters.  Larger samples of bright sirens, deeper and more
complete galaxy catalogs, and the extended redshift reach of next-generation
detectors should help break the propagation--population degeneracies exposed
here \cite{Finke2021,Mukherjee2021,ChenGray2024,Bertheas2026,DeLeo2026,
Nanadoumgar2026}.

On the theory side, the scalar and metric perturbations of the completion must
be evolved and confronted jointly with cosmic microwave background,
baryon-acoustic-oscillation, supernova, structure-growth, and local-gravity
constraints.  Extending the first-order action to include fermions,
Nieh--Yan terms, higher-curvature operators, and waveform-generation effects
will determine which observables can separate microscopic BI physics from a
generic running tensor normalization.  These steps can turn the present
action-consistency criterion into a direct inference of well-defined BI
parameters.

\section*{Acknowledgments}
\begin{sloppypar}
This research was supported by the National Natural Science Foundation of
China (NSFC) project No.~12288102, the National Key Research and Development
Program of China (2022YFC2205202), the NSFC (12573052, 12573103, 12373114,
12003009), and the Tianshan Talents Program (2023TSYCTD0013).
\end{sloppypar}

\section*{Data and Code Availability}
The two public GWTC-4.0 posterior products analyzed here are available from
the official LVK data release in Ref.~\cite{GWTC4data}.  They are not
redistributed with this manuscript.  A separate reproducibility archive
accompanying the submission contains the analysis scripts, a checksum-verifying
downloader for the exact public JSON files, the deterministic background grid
and diagnostics, and regression targets for the rounded catalog summaries.
Running the full analysis after downloading the public files regenerates the
sample transformations and all five figures.  The archive records all file
names and checksums needed to pin the analysis to the cited data-product
version.

\appendix

\section{Formal propagation theorem and proof}
\label{app:propagation_theorem}

For completeness we collect here the formal statement used in
Sec.~\ref{sec:completion}.

\begin{proposition}[Tensor propagation in the curvature-linear algebraic-connection class]
\label{prop:tensor_propagation}
Consider the action~\eqref{eq:mother_action} with an independent antisymmetric
Lorentz connection $\omega^{IJ}=-\omega^{JI}$ and bosonic matter that does not
couple directly to $\omega^{IJ}$.  Let the background be spatially flat FLRW
with a homogeneous $x(t)$, assume $A(x)>0$, and assume that the connection
equation is algebraic and is eliminated using its regular solution
(equivalently, Eq.~\eqref{eq:general_contorsion} after the Einstein-frame
tetrad redefinition).  Then, for linear transverse-traceless metric
perturbations (and in the absence of a tensor anisotropic-stress source), the
exact two-derivative quadratic tensor action is Eq.~\eqref{eq:prop_theorem_action},
with $Q_T$, $c_T$, and $\nu$ given by Eq.~\eqref{eq:prop_theorem_local}.  In
the geometric-optics regime, comparison with the electromagnetic luminosity
distance in the same Jordan-frame background yields
Eq.~\eqref{eq:prop_theorem_distance}.  Thus $B(x)$ can affect tensor
observables indirectly through the background solution for $x$, but it does
not supply an independent tensor kinetic, gradient, or friction operator
within Eq.~\eqref{eq:mother_action}.
\end{proposition}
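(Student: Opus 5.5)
The plan is to reduce the first-order action to the metric Jordan-frame action~\eqref{eq:Jordan_action} \emph{before} perturbing, and only then expand to second order in the transverse-traceless modes.

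\textbf{Step 1: eliminate the connection.} The connection equation from~\eqref{eq:mother_action} is algebraic, because $\omega$ enters only through $R^{KL}(\omega)$ linearly and with no matter coupling. Split $\omega=\mathring\omega(e)+C$. Then $R(\omega)=R(\mathring\omega)+\mathring D C+C\wedge C$. After integrating $\mathring D C$ by parts, the derivative lands on $A$ and $B$, so the contorsion is sourced only by $\partial A$ and $\partial B$. Under the Weyl rescaling $g_E=Ag$ the $\partial A$ source is absorbed, and the remaining equation is Eq.~\eqref{eq:minimal_contorsion} with $\beta\to\gamma(x)$. By the assumed regularity (the matrix $1+\gamma^2$ is invertible), this solution is unique. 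Its key structural property is that $C_{\mu IJ}$ is linear in $\partial_\mu x$, with coefficients built only from the tetrad, $\gamma$ and $\epsilon$. Substituting it back gives Eqs.~\eqref{eq:Einstein_action}--\eqref{eq:kE_general}; I take these as established in the main text. Because the solution of an algebraic equation may be substituted back into the action without changing the dynamics of the remaining fields, this reduction holds off-shell in the metric. It is therefore valid for arbitrary metric perturbations, not only on the background.

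\textbf{Step 2: pass to the Jordan frame.} Undo the Weyl map to obtain~\eqref{eq:Jordan_action}, with the gravitational operator $\tfrac{\Mpl^2}{2}A(x)R$. The torsion remnant and $Z$ appear only inside $k_J(\partial x)^2$, and $B$ appears only through $\gamma_{,x}^2/(1+\gamma^2)\subset k_E$.

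\textbf{Step 3: expand to second order in $h_{ij}$.} Take $g_{ij}=a^2(\delta_{ij}+h_{ij})$ with $\partial_i h_{ij}=h_{ii}=0$ and $\delta x=0$. By the SO(3) decomposition, tensor modes decouple at quadratic order from scalar and vector modes. Each term then contributes as follows.
\begin{itemize}
\item $\sqrt{-g}\,k_J(x)g^{\mu\nu}\partial_\mu x\partial_\nu x$ reduces to $-\sqrt{-g}\,k_J\dot x^2$ for homogeneous $x$. Its $h$-dependence enters only through $\sqrt{-g}$, whose second-order piece is $-\tfrac14 h_{ij}h_{ij}$.
\item $U$ likewise contributes only through $\sqrt{-g}$.
\item $\sqrt{-g}\,A(t)R$ gives the standard expansion, with $A$ multiplying the $\dot h^2$ and $(\nabla h)^2$ terms. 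The time dependence of $A$ produces, after integration by parts, extra mass-like terms proportional to $h^2\,(\ddot A, H\dot A)$.
\item Matter contributes its $h^2$ terms via $\sqrt{-g}$ and $p$.
\end{itemize}

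\textbf{Step 4 (the main obstacle): show that all non-derivative $h^2$ terms cancel.} Collect the coefficients of $h_{ij}h_{ij}$ and show that their sum is proportional to the background spatial equation~\eqref{eq:Raychaudhuri} combined with~\eqref{eq:Friedmann}, and hence vanishes on shell. This is the bookkeeping-heavy step; I would do it with a Mathematica/xAct-style check or by hand. A cleaner route is to note that tensor modes carry no zero-momentum mass in any diffeomorphism-invariant theory on a solution: a constant $h_{ij}$ is a residual gauge transformation of the FLRW background. What remains is Eq.~\eqref{eq:prop_theorem_action}, which gives $Q_T=\Mpl^2A$, $c_T=1$ and the wave equation~\eqref{eq:nonminimal_wave}.

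\textbf{Step 5: the geometric-optics distance relation.} Use the WKB ansatz $h=\mathcal A e^{i\phi}$ in~\eqref{eq:nonminimal_wave}. The transport equation gives $\mathcal A\propto 1/(a\sqrt{A})$, to be compared with the electromagnetic amplitude scaling $1/a$ in the same Jordan-frame background and with the same null geodesics (since $c_T=1$). The ratio of amplitudes then yields~\eqref{eq:prop_theorem_distance}. The statement about $B$ follows because $B$ enters only $k_E$, and hence only the background trajectory $x(t)$.
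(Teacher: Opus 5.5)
Your proposal is correct and follows the paper's own route. You rescale to the Einstein-frame tetrad so the curvature term becomes the Einstein--Cartan--Holst operator with ratio $\gamma=B/A$, eliminate the algebraic connection off-shell, return to the Jordan frame where $A(x)R$ is the only metric-curvature operator, expand it on FLRW, and read off the $(a\sqrt{A})^{-1}$ amplitude scaling. Your Step~4 only makes explicit the on-shell cancellation of the $h_{ij}h_{ij}$ mass terms that the paper leaves implicit. One caveat: the residual-gauge shortcut holds only when the matter background is invariant under constant linear spatial maps, i.e.\ no solid-like shear response, and this is exactly what the proposition's no-anisotropic-stress hypothesis provides.
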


\begin{proof}
Introduce the Einstein-frame tetrad $e_E^I=\sqrt{A}\,e^I$, for which the
curvature part of Eq.~\eqref{eq:mother_action} is the Einstein--Cartan--Holst
operator with ratio $\gamma=B/A$.  Because the connection is algebraic, its
elimination produces Eq.~\eqref{eq:Einstein_action}: the $\gamma$ dependence
is transferred to the scalar kinetic coefficient $k_E$, while the metric
curvature term is the Einstein--Hilbert term.  Transforming back to the matter
frame gives Eq.~\eqref{eq:Jordan_action}.  At the stated derivative order its
only metric-curvature operator is $A(x)R$; all algebraic-torsion remnants lie
in the scalar sector.  Expanding $A(x)R$ around homogeneous FLRW gives
Eq.~\eqref{eq:prop_theorem_action}, from which Eq.~\eqref{eq:prop_theorem_local}
follows.  In geometric optics the canonically normalized tensor amplitude
redshifts as $(a\sqrt{A})^{-1}$, yielding Eq.~\eqref{eq:prop_theorem_distance}.
\end{proof}

\begin{corollary}[Minimal dynamical Holst propagation result]
\label{cor:minimal_no_go}
In the minimal bosonic two-derivative dynamical-Holst theory, $A=1$
identically.  For any homogeneous BI-field history allowed by that action, and
with no connection-coupled spin current or additional curvature or torsion
operators, $Q_T=\Mpl^2$, $c_T=1$, $\nu=0$, and the propagation-defined
luminosity distances satisfy $d_L^{\rm gw}=d_L^{\rm em}$.
\end{corollary}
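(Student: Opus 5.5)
The plan is to obtain the corollary as a direct specialization of Proposition~\ref{prop:tensor_propagation} to the minimal dynamical-Holst action, checking that each hypothesis of the proposition is met.

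\emph{Embedding the minimal theory.} Eq.~\eqref{eq:minimal_first_order} is the mother action~\eqref{eq:mother_action} with
\begin{equation*}
 x=\beta,\qquad A\equiv1,\qquad B(x)=x,\qquad Z\equiv0,\qquad U=V.
\end{equation*}
The scalar kinetic term then comes entirely from torsion. Parity requires $B$ to be odd, which holds. The requirements that $A$, $Z$, $U$ be even only matter for symmetry-restricted subclasses and are not used in the proof. Next I verify the hypotheses:
\begin{itemize}
\item $A=1>0$.
\item The Einstein-frame tetrad redefinition $e_E^I=\sqrt{A}\,e^I$ is the identity.
\item The connection equation is algebraic, with regular solution~\eqref{eq:minimal_contorsion}. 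Regularity holds because $1+\beta^2>0$ for all real $\beta$, so no field value is excluded.
\item The exclusion of connection-coupled spin currents and extra curvature or torsion operators in the corollary matches the proposition's assumptions of bosonic matter not coupled to $\omega^{IJ}$ and a curvature-linear action.
\end{itemize}

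\emph{Applying the proposition.} Since $A=1$ and $\gamma=\beta$, Eq.~\eqref{eq:kE_general} gives $k_E=\tfrac32/(1+\beta^2)$. This reproduces Eq.~\eqref{eq:minimal_metric} and serves as a consistency check. The proposition then gives the tensor action~\eqref{eq:prop_theorem_action} with $A=1$, which is Eq.~\eqref{eq:minimal_tensor_action}. Hence $Q_T=\Mpl^2$, $c_T=1$, and $\nu=d\ln A/d\ln a=0$ identically. This holds along any homogeneous history $\beta(t)$ because $A$ has no field dependence. Eq.~\eqref{eq:prop_theorem_distance} then gives $\Xi=\sqrt{A_0/A(z)}=1$, i.e.\ $d_L^{\rm gw}=d_L^{\rm em}$. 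As a cross-check, Eq.~\eqref{eq:minimal_wave} gives amplitude $\propto a^{-1}$, exactly as for electromagnetic radiation on the same background.

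\emph{Main obstacle.} The one point needing care is that the result holds for \emph{any} homogeneous history, with no small-field or slow-roll assumption. The torsion~\eqref{eq:minimal_contorsion} depends only on $\partial\beta$. On FLRW it is therefore set by $\dot\beta$ and is nonzero. Its quadratic contorsion terms, expanded to second order in $h_{ij}$, could in principle contribute tensor terms. I would argue they do not, as follows.
\begin{itemize}
\item After elimination, all contorsion remnants appear only as $(1+\beta^2)^{-1}g^{\mu\nu}\partial_\mu\beta\,\partial_\nu\beta$ together with the $\sqrt{-g}$ measure.
\item Expanded to second order in transverse-traceless $h_{ij}$, with $\partial_i\beta=0$, this gives only undifferentiated terms of the form $h_{ij}h^{ij}$ multiplying the background scalar Lagrangian.
\item These terms cancel against the matching pieces from $\sqrt{-g}R$ and the potential via the background equations, exactly as in GR with a minimally coupled scalar.
\end{itemize}
Consequently no $\dot h^2$ or $(\nabla h)^2$ coefficient depends on $\beta$. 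The absence of a tensor anisotropic stress is also inherited, since the background field is homogeneous.
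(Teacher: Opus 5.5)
Your proposal is correct and follows the paper's route: the corollary is the $A=1$ specialization of Proposition~\ref{prop:tensor_propagation} (with $B=\beta$, $Z=0$, $U=V$), cross-checked against the direct reduction to Eq.~\eqref{eq:minimal_metric} and the tensor action~\eqref{eq:minimal_tensor_action} in Sec.~\ref{sec:minimal}. Your extra point is a useful explicit justification of why the result holds for arbitrary $\beta(t)$, which the paper only asserts: after exact elimination, the torsion remnants enter the covariant reduced action only through $g^{\mu\nu}\partial_\mu\beta\,\partial_\nu\beta$ and $\sqrt{-g}$, so for transverse-traceless modes they give only a non-derivative $h_{ij}h_{ij}$ term that the background equations cancel.
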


\section{Connection elimination and kinetic matching}
\label{app:connection}

Factor $A(x)$ out of the curvature part of Eq.~\eqref{eq:mother_action} and
perform $g^E_{\mu\nu}=A g_{\mu\nu}$.  Up to a boundary term, the connection
dependent action is quadratic and linear in contorsion.  Varying it gives
\begin{equation}
 C_{\mu IJ}=-\frac{1}{2[1+\gamma(x)^2]}
 \left(\epsilon_{IJ}{}^{KL}e^E_{\mu K}\partial_L\gamma
 -2\gamma e^E_{\mu[I}\partial_{J]}\gamma\right).
 \label{eq:general_contorsion}
\end{equation}
Substituting Eq.~\eqref{eq:general_contorsion} produces
$-(3\Mpl^2/4)\gamma_{,x}^2(\partial x)^2/(1+\gamma^2)$.
The explicit scalar kinetic term transforms to
$-(\Mpl^2/2)(Z/A)(\partial x)^2$, yielding Eq.~\eqref{eq:kE_general}.
The inverse Weyl transformation obeys
\begin{equation}
 \sqrt{-g_E}R_E=\sqrt{-g}\left[A R+
 \frac{3}{2A}(\partial A)^2\right]+\hbox{boundary},
\end{equation}
which gives Eq.~\eqref{eq:kJ}.  This derivation also shows why the metric
operator $A(x)R$ and the torsion-induced term cannot be varied independently
once a mother action has been specified, and it supplies the connection-level
input used in Proposition~\ref{prop:tensor_propagation}.

\section{Background reconstruction details}
\label{app:background}

Using $d/dt=H d/dN$, one has
$\dot H/H^2=y'/(2y)$, $\dot A=HA'$, and
$\ddot A=H^2[A''+(y'/2y)A']$.  Substitution into
Eq.~\eqref{eq:Raychaudhuri} gives Eq.~\eqref{eq:y_ode}; substitution into
Eq.~\eqref{eq:Friedmann} gives Eq.~\eqref{eq:potential_reconstruction}.

At $N=0$, $x=0$ and $A_{,x}=k_{J,x}=0$.  A parity-even reconstructed
potential requires $U_{,x}(0)=0$.  The scalar equation then reduces to
\begin{equation}
 x''_0+\left(3+\frac{y'_0}{2}\right)x'_0=0.
 \label{eq:parity_boundary}
\end{equation}
For Eq.~\eqref{eq:trajectory}, this condition fixes
$c=N_t(3+y'_0/2)/2=2.52741375$.  Numerically the residual of
Eq.~\eqref{eq:parity_boundary} is below machine precision.  The supplied CSV
contains $x$, $A$, $H/H_0$, $U/(\Mpl^2H_0^2)$, $k_E$, and $k_J$ at 5000
points from $z=1100$ to the present.

\section{Posterior transformations and interval conventions}
\label{app:posterior}

For normalized sample weights $w_i$, weighted quantiles are calculated from
the ordered cumulative weights.  The equal-pipeline mixture assigns total
weight $1/2$ to each public file, not equal weight to each sample across files.
All derived coordinates are evaluated at the same sample index:
\begin{equation}
 \Delta_{M,i}=-2\ln\Xi_{0,i},\qquad
 (A_\infty/A_0)_i=\Xi_{0,i}^{-2}.
\end{equation}
The reweighted effective sample size is
$N_{\rm eff}=(\sum_i\tilde w_i^2)^{-1}$ after normalization.  A transformed
highest-density interval is not generally the highest-density interval in the
new coordinate because the probability density acquires a Jacobian.  This is
why Table~\ref{tab:catalog_intervals} is labelled an endpoint map, whereas
Table~\ref{tab:sample_results} is a sample-level posterior summary.

\section{Reproducibility}
\label{app:reproducibility}

The reproducibility archive contains four scripts with separated roles.
\path{code/analysis_core.py} implements the equations and weighted
statistics; \path{code/theory_checks.py} independently checks the background
numbers, the distance-map integral identity, the analytic $X_0=0$ boundary,
and the prior Jacobian without requiring catalog downloads;
\path{code/download_public_data.py} retrieves the two versioned public
GWTC-4.0 files and verifies their MD5 hashes; and
\path{code/analyze_and_plot.py} performs the complete sample-level recast,
integrates Eq.~\eqref{eq:y_ode}, reconstructs
Eq.~\eqref{eq:potential_reconstruction}, writes the derived tables, and
regenerates all five figures.  The public source is Zenodo record 16919645,
cited in Ref.~\cite{GWTC4data}.  The release-listed MD5 checksums are
\texttt{6b51973c94483338\allowbreak d236b933d7857e81} for the icarogw file and
\texttt{db655ba66532b19c\allowbreak b4a5333b7c607387} for the gwcosmo file.
The raw JSON files are intentionally not duplicated in the archive.  Once they
are downloaded, the option \texttt{--verify-\allowbreak manuscript} compares regenerated
posterior summaries with the rounded values printed in the paper.

\end{document}